\documentclass[11pt,a4paper]{article}
\usepackage[utf8]{inputenc}
\usepackage[T1]{fontenc}
\usepackage{mathptmx}
\usepackage[margin=1in]{geometry}
\usepackage{graphicx}
\usepackage{booktabs}
\usepackage{amsmath,amssymb,amsthm}
\usepackage{hyperref}
\usepackage{xcolor}
\usepackage{enumitem}
\usepackage{caption}
\usepackage{tabularx}
\usepackage{float}

\hypersetup{colorlinks=true,linkcolor=blue!60!black,citecolor=blue!60!black,urlcolor=blue!60!black}

\newtheorem{theorem}{Theorem}
\newtheorem{definition}{Definition}

\title{\textbf{AITH: A Post-Quantum Continuous Delegation Protocol\\for Human-AI Trust Establishment}}
\author{Zhaoliang Chen\\
\textit{Faculty of Business Administration, University of Macau, Macao SAR, China}\\
\texttt{yc57068@um.edu.mo}}
\date{April 2026}

\begin{document}
\maketitle

\begin{abstract}
The rapid deployment of AI agents acting autonomously on behalf of human principals---executing financial trades, managing infrastructure, negotiating contracts---has outpaced the development of cryptographic protocols capable of establishing, bounding, and revoking the trust relationship between humans and AI systems. Existing authorization frameworks (TLS, OAuth~2.0, SPIFFE, Macaroons) were designed for deterministic software and fundamentally cannot address the unique challenges posed by probabilistic AI agents: continuous autonomous operation within variable trust boundaries, real-time constraint enforcement without per-operation human approval, and cryptographically verifiable responsibility attribution when AI decisions cause harm.

We present AITH (AI Trust Handshake), a post-quantum continuous delegation protocol that enables humans to grant AI agents bounded, revocable, and auditable authority. AITH introduces three technical contributions: (1)~a \textbf{Continuous Delegation Certificate} signed once with ML-DSA-87 (FIPS~204, NIST Level~5) that replaces per-operation signing with sub-microsecond boundary checks, achieving 4.7~million operations per second on a single core; (2)~a \textbf{six-check Boundary Engine} with formally specified check ordering that enforces hard constraints, rate limits, and escalation triggers without cryptographic overhead on the critical path; and (3)~a \textbf{push-based Revocation Protocol} that propagates certificate invalidation to all registered target systems within one second. A three-tier \textbf{Responsibility Chain} using SHA-256 hash chains provides tamper-evident logging suitable for regulatory audit and legal evidence extraction. All five core security properties have been \textbf{machine-verified using the Tamarin Prover} under the Dolev-Yao threat model.

We validate the protocol through a novel methodology: five rounds of structured adversarial security auditing using two frontier AI systems (Model~$\alpha$ and Model~$\beta$) as red-blue team instruments under human direction, supplemented by independent stress testing from Model~$\gamma$ and Model~$\delta$. This multi-model audit identified and resolved 12~vulnerabilities across four severity layers, driving AITH from v1 to v5.1. A simulation of 100,000~operations demonstrates that 79.5\% proceed autonomously within bounds, 6.1\% trigger human escalation, and 14.4\% are blocked---confirming that AITH enables high-throughput AI autonomy while maintaining cryptographic safety guarantees.
\end{abstract}

\noindent\textbf{Keywords:} AI agent authorization, continuous delegation, post-quantum cryptography, ML-DSA, boundary enforcement, trust protocol, human-AI interaction, push-based revocation, adversarial validation

\section{Introduction}

On March~14, 2026, FINRA published a report recommending that financial firms evaluate how to monitor AI agent system access and data handling, where to place human-in-the-loop oversight, and how to establish guardrails limiting agent behaviors~\cite{finra2026}. Two weeks later, the SEC's Division of Investment Management described the potential for fund-provided AI agents to interact directly with investors, while acknowledging unresolved liability and regulatory classification questions~\cite{sec2026}. These developments signal that AI agents are transitioning from experimental assistants to autonomous economic actors---and that existing regulatory and technical infrastructure is unprepared.

The core technical problem is straightforward: \textbf{how does a human grant an AI agent the authority to act on their behalf, with cryptographic guarantees that the agent cannot exceed its authorized boundaries, and with the ability to revoke that authority instantly?} This problem has no satisfactory solution in the current security infrastructure stack.

\subsection{The Inadequacy of Existing Approaches}

Transport Layer Security (TLS~1.3) authenticates communication channels, not delegation relationships. OAuth~2.0 provides scoped access credentials, but its scope model is a flat string incapable of encoding multi-dimensional constraints (amount limits, asset restrictions, time windows, rate caps). SPIFFE provides workload identity but assumes mutual TLS between deterministic services. Macaroons~\cite{birgisson2014macaroons} offer attenuated bearer credentials with caveats, but lack push-based revocation: revoking a Macaroon requires waiting for the bearer to present it, creating an unbounded window during which a compromised agent can continue operating. All four approaches share a fundamental assumption: the authorized entity is deterministic software. An AI agent violates this assumption.

\subsection{The Continuous Delegation Paradigm}

We propose a fundamentally different approach: replace per-operation authorization with \textbf{continuous delegation}. A human signs a Delegation Certificate once, specifying the AI agent's identity, boundaries, escalation triggers, and temporal validity. The AI agent operates autonomously within these boundaries. Each operation is checked by a Boundary Engine---a deterministic, sub-microsecond comparator requiring no cryptographic operations on the critical path. The analogy is to a legal power of attorney: bounded authority through a single signed document. Unlike a traditional power of attorney, however, a Delegation Certificate is machine-verifiable, its boundaries are enforced by deterministic code, and revocation propagates within one second. We frame AITH as ``Automated Standing Instructions with Cryptographic Attestation.''

\subsection{Contributions}

This paper makes the following contributions:

\begin{enumerate}[leftmargin=*]
\item \textbf{Continuous Delegation Certificate.} A post-quantum signed credential (ML-DSA-87, NIST Level~5) encoding multi-dimensional boundary constraints, escalation triggers, and temporal validity. After signing, routine operations require zero cryptographic operations (\S\ref{sec:cert}).

\item \textbf{Six-Check Boundary Engine.} A formally ordered constraint enforcement mechanism achieving 0.21\,$\mu$s mean latency and 4.7M~ops/sec on a single core (\S\ref{sec:engine}, \S\ref{sec:eval}).

\item \textbf{Push-Based Revocation with Responsibility Chain.} Sub-second revocation propagation coupled with a three-tier SHA-256 hash chain for tamper-evident audit (\S\ref{sec:revocation}--\S\ref{sec:chain}).

\item \textbf{Multi-Model Adversarial Audit.} Five structured rounds between Model~$\alpha$ and Model~$\beta$, with stress testing by Model~$\gamma$ and Model~$\delta$, uncovering 12~vulnerabilities across four severity layers (\S\ref{sec:adversarial}).

\item \textbf{Machine-Verified Security.} All five core security theorems verified by the Tamarin Prover under the Dolev-Yao threat model (\S\ref{sec:tamarin}).
\end{enumerate}

\section{Background and Related Work}

\subsection{Post-Quantum Digital Signatures: ML-DSA}

AITH's cryptographic foundation is ML-DSA (Module Lattice Digital Signature Algorithm), standardized as FIPS~204 by NIST in August~2024~\cite{nist2024mldsa}. ML-DSA-87 provides NIST Level~5 (256-bit) security based on the Module Learning With Errors (M-LWE) problem~\cite{regev2005lattices}. AITH uses ML-DSA exclusively for certificate signing---a one-time operation---meaning that ML-DSA's relatively large signature sizes (4,627~bytes) and signing times ($\sim$2\,ms) do not impact operational throughput.

\subsection{Existing Delegation and Authorization Schemes}

Table~\ref{tab:comparison} compares AITH with existing delegation mechanisms.

\begin{table}[H]
\centering
\caption{AITH vs existing delegation schemes.}
\label{tab:comparison}
\small
\begin{tabular}{lccccc}
\toprule
\textbf{Property} & \textbf{TLS 1.3} & \textbf{OAuth 2.0} & \textbf{Macaroons} & \textbf{AITH v5.1} \\
\midrule
For AI agents & No & No & No & Yes \\
Continuous delegation & No & Credential TTL & Caveats & Certificate \\
Boundary enforcement & No & Scope str. & Caveats & 6-check engine \\
Per-op crypto cost & Handshake & Credential verify & HMAC & None \\
Push revocation & No & No & No & Yes ($<$1s) \\
Responsibility chain & No & No & No & 3-tier hash \\
Post-quantum & Partial & No & No & ML-DSA-87 \\
\bottomrule
\end{tabular}
\end{table}

\subsection{Concurrent Work on AI Agent Authorization}
\label{sec:related}

The problem of AI agent authorization has attracted significant recent attention. We compare AITH with four concurrent approaches.

\textbf{Google AP2}~\cite{google2025ap2} uses cryptographically signed Mandates as proof of user intent for payment transactions. AP2's mandate model shares conceptual overlap with AITH's Delegation Certificate, but AP2 is scoped exclusively to payments, requires per-transaction signing, and provides no general-purpose boundary enforcement or push-based revocation.

\textbf{AIP}~\cite{prakash2026aip} introduces Invocation-Bound Capability Tokens (IBCTs) binding identity, authorization, and provenance. AIP uses Ed25519 (not post-quantum) and requires per-operation credential generation. AITH's continuous delegation eliminates per-operation crypto entirely, achieving 500$\times$ lower critical-path latency.

\textbf{A-JWT}~\cite{goswami2025ajwt} extends JWT with agent identity checksums and intent binding. While A-JWT addresses similar threat vectors, it remains within the JWT/OAuth ecosystem and inherits its limitations: no boundary engine, no push revocation, no post-quantum security.

\textbf{CA-MCPQ}~\cite{yoon2025camcpq} brings post-quantum cryptography to the Model Context Protocol via context-aware security level negotiation. CA-MCPQ addresses transport security but not authorization semantics: it has no delegation model, boundary engine, or responsibility chain.

\begin{table}[H]
\centering
\caption{AITH vs concurrent AI agent authorization approaches.}
\label{tab:concurrent}
\small
\begin{tabular}{lcccccc}
\toprule
 & \textbf{AP2} & \textbf{AIP} & \textbf{A-JWT} & \textbf{CA-MCPQ} & \textbf{AITH} \\
\midrule
Scope & Payments & General & Enterprise & MCP ext. & General \\
Delegation & Mandates & Cap. credentials & Intent credentials & Session & Cont. cert. \\
Per-op signing & Yes & Yes & Yes & Yes & No \\
Boundary engine & No & No & No & No & 6-check \\
Push revocation & No & TTL & No & No & Yes ($<$1s) \\
Resp. chain & No & Completion & No & No & 3-tier hash \\
Post-quantum & No & No & No & Yes & Yes \\
Tamarin verified & No & No & No & No & 5 lemmas \\
\bottomrule
\end{tabular}
\end{table}

AITH is, to our knowledge, the only protocol that combines continuous delegation, a formally verified boundary engine, push-based sub-second revocation, a three-tier legal-grade responsibility chain, and post-quantum security in a single integrated architecture.

\section{System Model and Threat Model}

\subsection{Entities}

\begin{itemize}[leftmargin=*]
\item \textbf{Human Principal (H).} Possesses an ML-DSA-87 key pair $(sk_H, pk_H)$. Sole entity authorized to issue, modify, or revoke Delegation Certificates.
\item \textbf{AI Agent (A).} Identified by model weight hash $h_A = \text{SHA-256}(\text{model\_weights})$. Does not possess a private key.
\item \textbf{Verifier / Target System (V).} Implements the Boundary Engine. Maintains registration for push-based revocation.
\item \textbf{Provider (P).} Operates the AI model. Issues attestation certificates binding runtime configuration to $h_A$.
\end{itemize}

\subsection{Adversary Model}

We consider an adversary $\mathcal{A}$ with capabilities including: MITM attacks, model substitution, credential theft, boundary bypass via operation splitting, replay attacks, trust escalation, revocation delay, slow drift attacks, cognitive poisoning (RAG injection), and reflexivity exploitation. Each vector and its mitigation are detailed in the extended version.

\subsection{Explicit Non-Goals}

AITH does not address: (1)~AI alignment; (2)~TEE side-channel attacks; (3)~constraint completeness (human responsibility); (4)~semantic correctness of AI decisions; (5)~cryptographic breakthroughs against M-LWE.

\section{AITH Protocol Specification}

\subsection{Delegation Certificate}
\label{sec:cert}

\begin{definition}[Delegation Certificate]
A Delegation Certificate is a tuple $D = (id_H, pk_H, id_A, h_A, \ell, C, E, R, t_{\text{issue}}, t_{\text{expire}}, \textit{sem}, \sigma_H)$ where $\ell \in \{\text{limited}, \text{general}, \text{full}\}$ is the delegation level; $C = \{c_1, \ldots, c_n\}$ is the boundary constraint set; $E = \{e_1, \ldots, e_m\}$ is the escalation trigger set; $R$ is the set of registered target system endpoints; $\textit{sem}$ is the \texttt{semantic\_auditor\_pubkey} (optional); and $\sigma_H = \text{ML-DSA.Sign}(sk_H, \mathcal{H}(D \setminus \sigma_H))$.
\end{definition}

The certificate is signed exactly once using ML-DSA-87 (NIST Level~5, 256-bit security). After issuance, no further cryptographic operations are required for routine operations, reducing critical-path latency from $\sim$2\,ms (ML-DSA signing) to $\sim$0.21\,$\mu$s (boundary comparison).

\subsection{Boundary Engine}
\label{sec:engine}

\begin{definition}[Boundary Engine]
For operation $op = (\text{type}, \text{params}, \text{timestamp})$, the Boundary Engine $\text{BE}(D, op)$ executes six checks in strict sequence:
\begin{enumerate}
\item \textbf{Certificate Validity:} Verify temporal bounds, ML-DSA signature, and agent identity $h_A$.
\item \textbf{Delegation Level:} Verify $op.\text{type}$ is permitted at level $\ell$.
\item \textbf{Boundary Constraints:} Evaluate every $c_i \in C$ against $op$.
\item \textbf{Rate Limits:} Verify aggregate windowed limits.
\item \textbf{Anomaly Detection:} Compare $op$ against behavioral baseline.
\item \textbf{Escalation:} If triggered, pause for human confirmation (300s timeout).
\end{enumerate}
If any check fails, subsequent checks are not executed.
\end{definition}

\subsection{Escalation Protocol}

Three trigger types: threshold (approaching boundary limit), novelty (unseen operation type), and composition (sequence collectively exceeding policy). Human responds APPROVE, DENY, or MODIFY. Timeout auto-denies for fail-safe behavior.

\subsection{Push-Based Revocation}
\label{sec:revocation}

\begin{definition}[Push-Based Revocation]
A revocation message $\text{rev} = (\text{cert\_id}, \text{reason}, \text{timestamp}, \sigma_H)$ is pushed to all $V_k \in R$ simultaneously. Three modes: immediate (abort in-flight), graceful (complete then stop), partial (tighten boundaries). Propagation bounded by $\Delta t_{\max} = \max_k(\text{RTT}_k) + \delta_{\text{process}} < 1$s.
\end{definition}

\subsection{Responsibility Chain}
\label{sec:chain}

Three-tier SHA-256 hash chain: \textbf{Tier~1} (AI decision log, every operation), \textbf{Tier~2} (human confirmation log, escalated operations only, counter-signed), \textbf{Tier~3} (system execution log). Each entry includes $\text{prev\_hash} = \text{SHA-256}(\text{entry}_{i-1})$, forming a tamper-evident sequence for legal evidence extraction.

\subsection{Protocol State Machine}

Six states (UNINITIALIZED, ACTIVE, ESCALATED, SUSPENDED, REVOKED, ERROR), nine transitions, four invariants. Invariant~I1: in state ACTIVE, all six checks pass before execution. Invariant~I3: REVOKED is terminal.

\section{Security Analysis}

We present five security theorems. Full proofs and the Tamarin model appear in the supplementary materials.

\begin{theorem}[Certificate Unforgeability]
Under the M-LWE assumption, no PPT adversary can forge a valid Delegation Certificate without $sk_H$, except with negligible probability.
\end{theorem}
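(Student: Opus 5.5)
The plan is a standard reduction: a certificate forger turns directly into a signature forger. The signature scheme is the hash-then-sign construction $\sigma_H = \text{ML-DSA.Sign}(sk_H, \mathcal{H}(D\setminus\sigma_H))$. We take a PPT adversary $\mathcal{A}$ that outputs a valid Delegation Certificate $D^*$ not issued by $H$, after seeing polynomially many honestly issued certificates and revocation messages. From it we build a forger $\mathcal{B}$ against the EUF-CMA security of ML-DSA-87.

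\textbf{Simulation.} $\mathcal{B}$ receives $pk_H$ from the EUF-CMA challenger and places it in the certificate's $pk_H$ field. Whenever $\mathcal{A}$ asks for a certificate or a signed revocation, $\mathcal{B}$ forwards the corresponding hashed message to its signing oracle. This simulation is perfect: $\mathcal{A}$'s view is identical to the real protocol. We fix a precise notion of ``valid'': $D^*$ passes Check~1 of the Boundary Engine, meaning its temporal bounds hold, $\text{ML-DSA.Verify}(pk_H, \mathcal{H}(D^*\setminus\sigma^*), \sigma^*)=1$, and $h_A$ matches. We fix ``forgery'' to mean that $D^*\setminus\sigma^*$ was never signed by $H$.

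\textbf{Case split on the forgery.} Let $m^* = \mathcal{H}(D^*\setminus\sigma^*)$.
\begin{enumerate}
\item If $m^*$ differs from every digest that $\mathcal{B}$ submitted to its oracle, then $(m^*,\sigma^*)$ is a valid EUF-CMA forgery.
\item Otherwise $m^*$ equals some queried digest while $D^*\setminus\sigma^*$ differs from the corresponding queried body. That pair is a collision for $\mathcal{H}$ (SHA-256/SHAKE), so we bound this case by collision resistance.
\end{enumerate}
This gives
\[
\mathrm{Adv}^{\mathrm{forge}}(\mathcal{A}) \le \mathrm{Adv}^{\mathrm{euf\text{-}cma}}_{\text{ML-DSA}}(\mathcal{B}) + \mathrm{Adv}^{\mathrm{cr}}_{\mathcal{H}}(\mathcal{B}').
\]
Revocation messages use the same key, so we require domain separation between certificate and revocation encodings, for example a type tag inside the hashed content. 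Without it, a signed revocation could be reinterpreted as a certificate body. If the encodings turn out not to be domain-separated, this becomes a third case, and I would argue it away through unambiguous parsing.

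\textbf{Reducing EUF-CMA to M-LWE.} We invoke the known security proof of ML-DSA (Dilithium): Fiat--Shamir with aborts, analysed in the (Q)ROM. In that proof EUF-CMA reduces to M-LWE, which gives key indistinguishability, together with MSIS or SelfTargetMSIS, which gives unforgeability once the key is replaced by a random one. This is the main obstacle. The theorem cites M-LWE alone, but the standard reduction also needs SelfTargetMSIS and the random-oracle model for $\mathcal{H}$ and the internal hashes. The reduction is also non-tight in the QROM.

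I would handle this in one of two ways. The first is to state the theorem honestly as ``under M-LWE, (Self-Target) M-SIS, in the QROM,'' citing the FIPS~204 and Dilithium security analyses. The second is to treat ``ML-DSA is EUF-CMA under M-LWE'' as a black-box assumption, which is the convention the paper appears to follow. Either way, the protocol-level content of the proof is only the simulation and the case split above. The resulting advantage is negligible because each term on the right-hand side is negligible.
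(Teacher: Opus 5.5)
Your proposal is correct and follows the paper's route. The paper's proof sketch is just the reduction of certificate forgery to ML-DSA's EUF-CMA security, and your collision term for $\mathcal{H}$ and domain-separation requirement supply details it leaves implicit; that requirement should cover every message signed under $sk_H$, e.g.\ Tier~2 counter-signatures or escalation responses if they share the key. Your caveat is also right: for the standardized ML-DSA parameters the known proofs need (Self-Target) M-SIS and the (Q)ROM in addition to M-LWE, so the paper's ``under the M-LWE assumption'' is in effect your second option of treating ML-DSA's EUF-CMA security as a black box.
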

\textit{Proof sketch.} Reduces to ML-DSA's EUF-CMA security~\cite{nist2024mldsa}. \hfill$\square$

\begin{theorem}[Boundary Inviolability]
No operation violating any constraint $c_i \in C$ with $\text{action} = \text{block}$ can be executed in state ACTIVE.
\end{theorem}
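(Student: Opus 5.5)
The plan is to argue by the structure of the protocol state machine and the Boundary Engine definition, establishing the result as an invariant over all reachable executions. I will proceed by induction on the length of the execution trace.

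For the inductive claim, I will show that in every reachable configuration, any execution event $\mathsf{Exec}(op)$ occurring while the verifier is in state ACTIVE is immediately preceded by a complete run of $\text{BE}(D, op)$ in which all six checks returned pass. This is exactly Invariant~I1. I will establish it by enumerating the nine transitions. The only transition that emits an execution from ACTIVE is the one guarded by the conjunction of the six check outcomes. Every other transition either leaves ACTIVE (to ESCALATED, SUSPENDED, REVOKED, or ERROR) or does not execute.

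The second step uses the strict sequencing and short-circuit clause of the Boundary Engine definition: if any check fails, subsequent checks are not executed and no execution occurs. Check~3 evaluates every $c_i \in C$ against $op$. If $op$ violates some $c_i$ with $\text{action}=\text{block}$, Check~3 fails, so the guard of the executing transition is false. By I1 the operation is not executed in ACTIVE.

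The remaining care goes to the set $C$ itself. It must be the set in the genuine certificate, not one supplied by the adversary. Here I invoke Certificate Unforgeability: Check~1 verifies $\sigma_H$ over $\mathcal{H}(D\setminus\sigma_H)$, so with overwhelming probability the $C$ consulted in Check~3 is the human-signed one. The statement is therefore unconditional with respect to the state-machine logic and holds except with negligible probability cryptographically.

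I expect the main obstacle to be the interaction with escalation and partial revocation. Two cases need attention:
\begin{itemize}
\item An ESCALATED$\to$ACTIVE transition after APPROVE or MODIFY must not smuggle in an operation that bypasses Check~3.
\item A partial revocation that tightens $C$ must be atomic with respect to in-flight checks, so that no operation is checked against a stale $C$ and executed after the constraint is tightened.
\end{itemize}
I would handle both by requiring that return to ACTIVE re-runs the full engine on the (possibly modified) $op$ under the current $C$. I would then encode this as a Tamarin restriction linking each $\mathsf{Exec}$ action fact to a prior $\mathsf{CheckPass}(C,op)$ fact with the same certificate version. The lemma then follows by Tamarin's backward search over the finitely many rules.
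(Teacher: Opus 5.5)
Your proof is correct and follows the paper's argument: Invariant~I1 forces all six checks to pass before any execution in ACTIVE, and Check~3 evaluates every $c_i \in C$ under short-circuit semantics, so no block-action violation can reach execution. The extra scaffolding only makes explicit what the paper folds into I1, and part of it is unnecessary---escalation is Check~6, so an escalated operation has already passed Check~3 (only a stale $C$ or a MODIFY-ed operation would need re-checking), and unforgeability is not needed because $C$ is by definition the constraint set of the $D$ that $\text{BE}(D,op)$ consults, so the statement holds deterministically rather than up to negligible probability.
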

\textit{Proof sketch.} By invariant~I1, all six checks pass. Check~3 evaluates every constraint. \hfill$\square$

\begin{theorem}[Revocation Timeliness]
Upon revocation, all registered targets cease accepting operations within $\Delta t_{\max}$.
\end{theorem}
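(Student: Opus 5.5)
The argument is a timing decomposition resting on the push-based revocation definition and the protocol state machine. Fix a revocation issued by $H$ at time $t_0$ for certificate \texttt{cert\_id}. For each registered target $V_k \in R$, I will bound the time $t_k$ after which $V_k$ rejects every operation under that certificate, and show $t_k - t_0 \le \Delta t_{\max}$. The plan has three steps: delivery, acceptance and state transition, and rejection.

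\textbf{Delivery.} The revocation message $\text{rev}=(\text{cert\_id},\text{reason},\text{timestamp},\sigma_H)$ is pushed to all $V_k\in R$ simultaneously. Under a network assumption that delivers messages to $V_k$ within $\text{RTT}_k$, it therefore arrives at $V_k$ by $t_0+\text{RTT}_k$. This assumption must be made explicit: a Dolev-Yao adversary who controls the network can delay or drop messages indefinitely. The theorem should therefore be read relative to honest, or at least non-blocking, channels to registered targets. Alternatively, I would pair it with a fail-closed rule in which $V_k$ treats a missed heartbeat or acknowledgement as SUSPENDED; this also yields a bound of the same form.

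\textbf{Acceptance and state transition.} $V_k$ verifies $\sigma_H$ against the $pk_H$ recorded in $D$. By Certificate Unforgeability (the same EUF-CMA reduction to ML-DSA), a genuine revocation always verifies and a forged one does not. On acceptance, the state machine moves the certificate to REVOKED within $\delta_{\text{process}}$. By invariant~I3, REVOKED is terminal, so no later message, including a replayed earlier activation, can return it to ACTIVE.

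\textbf{Rejection.} Every subsequent operation passes through the Boundary Engine. Check~1 (Certificate Validity) consults the revocation state, so it fails for a REVOKED certificate, and the strict-sequence rule stops evaluation there. By invariant~I1, execution occurs only in ACTIVE after all six checks pass, so no operation executes. Hence $t_k \le t_0+\text{RTT}_k+\delta_{\text{process}}$. Taking the maximum over $k$ gives the bound $\max_k(\text{RTT}_k)+\delta_{\text{process}}=\Delta t_{\max}$. The three revocation modes differ only in how in-flight operations are handled: abort, completion, or tightened bounds. They do not affect the acceptance of new operations, so the bound covers all three, with partial revocation producing a modified rather than REVOKED state.

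\textbf{Main obstacle.} The difficulty is the first step: the timeliness claim cannot follow from cryptography alone, because an active network adversary controls delivery. I would handle this first by stating the bounded-delay network assumption explicitly. In the Tamarin model, I would express the property as an ordering lemma: every accepted operation by $V_k$ under \texttt{cert\_id} precedes the $\mathsf{Revoked}(V_k)$ event. The quantitative $\Delta t_{\max}$ would then be discharged separately by the timing argument above.
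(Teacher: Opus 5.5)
Your proposal is correct and follows the paper's route. The signed revocation is pushed to every $V_k \in R$ simultaneously, arrives within $\text{RTT}_k$, and takes effect after $\delta_{\text{process}}$; maximizing over $k$ then yields $\Delta t_{\max}$.

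You also make explicit two things the paper's one-line sketch leaves implicit. The first is the bounded-delay network assumption, which a Dolev-Yao adversary does not grant. The second is the passage from receipt to rejection via REVOKED and invariants I1/I3. Note that Check~1 as the paper defines it does not itself check revocation status, so that step rests on reading REVOKED as blocking all execution.
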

\textit{Proof sketch.} Push-based notification to all $V_k \in R$ simultaneously. Bounded by network RTT. \hfill$\square$

\begin{theorem}[Chain Integrity]
Any modification to a Responsibility Chain entry is detectable via hash chain verification.
\end{theorem}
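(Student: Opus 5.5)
The plan is to reduce Chain Integrity to the collision resistance (and second-preimage resistance) of SHA-256, relative to a trusted anchor. Model a Responsibility Chain as a sequence $\text{entry}_0, \ldots, \text{entry}_n$, where each $\text{entry}_i$ for $i \ge 1$ contains the field $\text{prev\_hash}_i = \text{SHA-256}(\text{entry}_{i-1})$. Verification recomputes every link and compares the final value $h_n = \text{SHA-256}(\text{entry}_n)$ against an authentic head value. That head value must be held outside the adversary's control: published to the Verifier, the Provider, or the Human Principal. For Tier~2 entries, the Human Principal's counter-signature under ML-DSA serves as an additional anchor.

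I will first make this anchor explicit, since without it the statement is false. An adversary who can rewrite every entry can simply recompute the whole chain. So the theorem will be stated as follows: for any modified chain $\text{entry}'_0, \ldots, \text{entry}'_{n'}$ that differs from the original in some entry but passes verification against the same anchored head $h_n$, we can extract a SHA-256 collision.

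The core step is a backward induction from the head. Suppose verification passes, so $\text{SHA-256}(\text{entry}'_n) = h_n = \text{SHA-256}(\text{entry}_n)$. Either $\text{entry}'_n \neq \text{entry}_n$, which gives a collision, or the two entries are equal. If they are equal, their embedded $\text{prev\_hash}$ fields agree, so $\text{SHA-256}(\text{entry}'_{n-1}) = \text{SHA-256}(\text{entry}_{n-1})$, and we repeat the argument one position back. Let $j$ be the largest index at which the chains differ. The argument then produces a collision at position $j$, assuming equal chain lengths. If the lengths differ, the same induction eventually compares a genesis entry against a non-genesis entry. Because the genesis entry has a distinguished fixed prefix or $\text{prev\_hash}$ value, this again yields a collision or a contradiction. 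Any PPT adversary producing an undetected modification therefore breaks collision resistance, which happens only with negligible probability.

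The main obstacle is the modeling rather than the reduction itself. The cleanest route is to fix a canonical, injective serialization of entries, so that "equal entries" implies "equal $\text{prev\_hash}$ fields." I would also treat the three tiers as separate chains, each with its own anchor.

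For Tier~2 I would add a short second argument. An adversary who alters a counter-signed entry must also forge a fresh signature, which Theorem~1 (Certificate Unforgeability, via ML-DSA EUF-CMA) rules out. This argument holds even if the head anchor is not independently published.

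Finally, I would note that "detectable" means detectable by any party holding the anchor. Cross-tier consistency, such as matching Tier~1 and Tier~3 entries by operation identifier, is an optional strengthening and is not needed for the theorem.
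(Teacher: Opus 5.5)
Your proposal is correct and follows the same route as the paper, whose proof sketch reduces any undetected modification to a SHA-256 collision, with Tamarin lemma L4 as machine-checked backing. Your explicit trusted head anchor and the backward induction from it supply what that one-line sketch leaves implicit---without an anchor, an adversary could recompute the whole chain, and no later link would check a modified final entry---so your version is a more careful instance of the paper's argument rather than a different one.
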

\textit{Proof sketch.} Requires SHA-256 collision---computationally infeasible. \hfill$\square$

\begin{theorem}[Delegation Scope Separation]
$\text{OpSet}(A, D) \cap \text{MgmtOps} = \emptyset$.
\end{theorem}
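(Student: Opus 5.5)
The plan is to prove the equality by unfolding definitions, since $\text{OpSet}(A,D)$ and $\text{MgmtOps}$ are not formally defined earlier. I will fix them as follows. $\text{MgmtOps}$ is the set of operations that issue, modify, or revoke a Delegation Certificate. By the entity model, only the Human Principal may perform these, and each requires a fresh signature under $sk_H$; this covers issuance $\sigma_H$, the revocation message $\text{rev}$, and partial-revocation boundary tightening. $\text{OpSet}(A,D)$ is the set of operations $op$ submitted on behalf of $A$ for which $\text{BE}(D,op)$ accepts. The claim then follows by showing that no management operation can be accepted on the agent's behalf.

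The argument splits into two cases, according to how a management operation could arise from $A$.

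\emph{Case (i): submitted as an ordinary operation.} Here the operation goes through the Boundary Engine. Check~2 accepts only types permitted at level $\ell \in \{\text{limited},\text{general},\text{full}\}$. We need, as a specification requirement, that the permitted-type set at every level is disjoint from $\text{MgmtOps}$. With that in place, Check~2 fails, and by the strict-sequence rule no later check runs and nothing executes. Invariant~I1 (in ACTIVE, all six checks pass before execution) then gives the same conclusion for every execution path in state ACTIVE.

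\emph{Case (ii): presented as a genuine management message.} Such a message, for example a new certificate or a $\text{rev}$, carries a signature that verifiers check against $pk_H$. The agent has no private key, and in particular not $sk_H$. By Certificate Unforgeability (Theorem~1), and the same EUF-CMA reduction applied to revocation and modification messages, $A$ produces a valid one only with negligible probability. The only signed object $A$ possesses is $D$ itself, and replaying it is not a new management operation.

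The main obstacle is the escalation path. When Check~6 fires, the human answers APPROVE or MODIFY, and a MODIFY could be argued to be a management operation triggered by the agent. I would handle this by attributing the resulting operation to $H$, since it is counter-signed in Tier~2, rather than to $A$, so it lies outside $\text{OpSet}(A,D)$. A further concern is that the level-permission tables must actually exclude management types even at level $\text{full}$. This is a design assumption, not something derivable, so I would state it explicitly and check it against the Tamarin model's rule set. There I would encode the claim as a lemma saying that every action fact \texttt{Mgmt} is preceded by a signature with $sk_H$, and that $sk_H$ is never revealed to $A$.

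Combining the two cases gives $\text{OpSet}(A,D) \cap \text{MgmtOps} = \emptyset$, except with negligible probability in case (ii).
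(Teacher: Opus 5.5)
Your case (ii) is essentially the paper's whole proof. The paper's argument is just that certificate management requires $sk_H$ and $A$ never possesses $sk_H$. It argues this symbolically, with ML-DSA idealized and backed by Tamarin lemma L5, and so states the identity exactly; your case (ii) gives the computational version, up to negligible probability.

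The real difference is your case (i), which you close with an extra design assumption: that the permitted-type tables exclude management types at every level, including $\text{full}$. That assumption is unnecessary. You yourself define $\text{MgmtOps}$ as operations each requiring a fresh signature under $sk_H$. So any operation $A$ submits through the Boundary Engine carries no valid fresh $\sigma_H$ and lies outside $\text{MgmtOps}$, whatever Check~2 permits. Case (i) therefore collapses into case (ii), and key exclusivity is the only load-bearing step, exactly as in the paper.

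Note also that under your definition of $\text{OpSet}(A,D)$ as Boundary-Engine-accepted operations, the messages in case (ii) are not in $\text{OpSet}(A,D)$ at all. As you have formalized it, the identity rests entirely on case (i) and its assumption. Routing case (i) through key exclusivity removes that dependence.

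What your version adds is explicitness about the escalation path: you attribute a MODIFY response to $H$ via the Tier~2 counter-signature, which the paper leaves implicit.

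One correction: ``the only signed object $A$ possesses is $D$'' is too strong under Dolev--Yao, since every pushed $\text{rev}$ and any other certificate $H$ issues are visible. Unforgeability only guarantees that $A$ cannot produce a \emph{fresh} $\sigma_H$. Blocking replay of superseded signed objects, such as an older, looser certificate after a partial revocation, would have to come from verifier state (e.g.\ keyed on $\text{cert\_id}$), not from Theorem~1.
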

\textit{Proof sketch.} Certificate management requires $sk_H$; $A$ never possesses $sk_H$. \hfill$\square$

\subsection{Machine-Verified Security Analysis}
\label{sec:tamarin}

\noindent\fcolorbox{green!50!black}{green!5}{%
\parbox{\dimexpr\linewidth-2\fboxsep-2\fboxrule}{%
\textbf{[Machine-Verified via Tamarin Prover]} All five security properties have been formally verified using the Tamarin Prover under the Dolev-Yao threat model. ML-DSA was modeled as an idealized EUF-CMA secure primitive---standard practice in cryptographic protocol verification~\cite{meier2013tamarin,blanchet2001proverif}. Boundary constraints were abstracted using mutually exclusive predicates (\texttt{IsWithinBounds}/\texttt{ViolatesBounds}).
}}

\begin{table}[H]
\centering
\caption{Tamarin Prover verification results.}
\label{tab:tamarin}
\small
\begin{tabular}{llcc}
\toprule
\textbf{Lemma} & \textbf{Property} & \textbf{Result} & \textbf{Steps} \\
\midrule
L1 & Certificate Unforgeability & Verified $\checkmark$ & 12 \\
L2 & Boundary Inviolability & Verified $\checkmark$ & 6 \\
L3 & Revocation Timeliness & Verified $\checkmark$ & 14 \\
L4 & Chain Integrity & Verified $\checkmark$ & 9 \\
L5 & Delegation Scope Separation & Verified $\checkmark$ & 8 \\
\bottomrule
\end{tabular}
\end{table}

\section{Implementation and Evaluation}
\label{sec:eval}

\subsection{Reference Implementation}

Python reference implementation (2,800+ lines). ML-DSA operations simulated via HMAC-SHA-256; production deployment would use liboqs~\cite{liboqs} or pqcrypto-rs.

\subsection{Large-Scale Simulation}

100,000~operations across 1,000 simulated users. Distribution: 40\% queries, 25\% small trades (\$500--\$5K), 15\% medium (\$5K--\$10K), 8\% large (\$15K+), 5\% transfers, 4\% over-limit, 3\% forbidden.

\begin{table}[H]
\centering
\caption{Performance benchmarks (Python, AMD Ryzen~9 7950X, single-threaded).}
\label{tab:perf}
\small
\begin{tabular}{ll}
\toprule
\textbf{Metric} & \textbf{Value} \\
\midrule
Boundary check latency (mean) & 0.21\,$\mu$s \\
Boundary check latency (P99) & 0.36\,$\mu$s \\
Single-core throughput & 4,703,414 ops/sec \\
Projected 64-core throughput & $\sim$301M ops/sec \\
Certificate issuance & 0.011\,ms \\
Chain append & 7.5\,$\mu$s \\
Revocation push (100 targets) & 0.095\,ms + network RTT \\
\bottomrule
\end{tabular}
\end{table}

\begin{table}[H]
\centering
\caption{Operation outcome distribution (100,000 operations).}
\small
\begin{tabular}{lccl}
\toprule
\textbf{Outcome} & \textbf{Count} & \textbf{\%} & \textbf{Interpretation} \\
\midrule
Allowed & 79,500 & 79.5\% & Autonomous within bounds \\
Escalated & 6,100 & 6.1\% & Human judgment required \\
Blocked & 14,400 & 14.4\% & Boundary violation caught \\
\bottomrule
\end{tabular}
\end{table}

\subsection{Comparative Latency}

\begin{table}[H]
\centering
\caption{Per-operation latency comparison.}
\small
\begin{tabular}{lll}
\toprule
\textbf{Mechanism} & \textbf{Per-Op Latency} & \textbf{Throughput} \\
\midrule
AITH Boundary Check & 0.21\,$\mu$s & 4.7M ops/sec \\
Macaroon (HMAC) & 1--10\,$\mu$s & 100K--1M ops/sec \\
OAuth 2.0 (JWT/RSA) & 0.1--1\,ms & 1K--10K ops/sec \\
ML-DSA per-op sign & $\sim$2\,ms & $\sim$500 ops/sec \\
\bottomrule
\end{tabular}
\end{table}

\section{Multi-Model Adversarial Security Audit}
\label{sec:adversarial}

\subsection{Methodology and Rationale}

A human operator clicking a mouse executes perhaps 2--5 actions per second. An AI agent operating under a Delegation Certificate can execute thousands of operations per second. This quantitative leap demands a fundamentally different approach to security auditing: adversarial analysis must operate at the speed and combinatorial depth that matches the threat surface. We therefore employed four frontier large language models as structured adversarial auditing instruments under human direction.

The audit proceeded over five rounds. Model~$\alpha$ served as protocol designer (Blue Team); Model~$\beta$ served as adversarial auditor (Red Team); the human principal served as strategic director, mediating disputes and making all final architectural decisions. After five rounds, Model~$\gamma$ and Model~$\delta$ conducted independent extreme stress testing on v5.0. All four models are frontier LLMs from different providers; identities are anonymized to prevent evaluation bias.

\subsection{Vulnerability Timeline}

\begin{table}[H]
\centering
\caption{Complete vulnerability timeline. 5 rounds, 12 vulnerabilities, 4 layers.}
\label{tab:vulns}
\small
\begin{tabular}{lllll}
\toprule
\textbf{Rnd} & \textbf{Vulnerability} & \textbf{Severity} & \textbf{Layer} & \textbf{Resolution} \\
\midrule
R1 & Hardcoded lattice params & Medium & Parameter & Parameterized config \\
R1 & No ring justification & Low & Parameter & NTRU ring documented \\
R2 & Custom hash unanalyzed & High & Algorithm & Replaced: SHA-256 \\
R2 & Gaussian sampling bias & High & Algorithm & CDT sampling \\
R2 & Missing formal reduction & High & Algorithm & SIS reduction proven \\
R3 & Trust score manipulation & Critical & Protocol & ML-DSA pivot (v3) \\
R3 & Session key derivation & High & Protocol & HKDF-SHA-256 \\
R4 & Escalation deadlock & Medium & Protocol & Timeout + fallback \\
R4 & Rate limit bypass & High & Business & Aggregate windowing \\
R4 & ZKP semantic gap & Medium & Protocol & Semantic auditor \\
R5 & Certificate single-point & Medium & Protocol & Push revocation \\
R5 & Slow drift attack & Medium & Business & Baseline resets \\
\bottomrule
\end{tabular}
\end{table}

\subsection{Architecture Evolution}

The protocol evolved through three phases, each driven by audit findings and human strategic decisions. \textbf{Phase~I (v1$\to$v2):} Rounds~1--2 exposed weaknesses in custom cryptographic primitives; the human principal directed replacement with NIST-standardized components. \textbf{Phase~II (v2$\to$v4):} Round~3 revealed that the custom trust scoring mechanism was vulnerable to manipulation---the human principal decided to pivot entirely to ML-DSA, eliminating a class of vulnerabilities by standing on standardized cryptography. \textbf{Phase~III (v4$\to$v5):} The most consequential change emerged not from any audit finding but from the human principal's observation that per-operation signing creates an inherent bottleneck incompatible with autonomous AI operation at scale. Neither Model~$\alpha$ nor Model~$\beta$ independently identified this constraint. This paradigm shift---from per-operation signing to continuous delegation---is the architectural decision that defines AITH v5.

\subsection{Limitations}

We acknowledge three limitations. First, all AI auditors share overlapping training data, creating potential blind spots where all models hold the same incorrect assumption. Second, AI-driven auditing cannot test hardware side-channels, network-level exploits, or social engineering. Third, this methodology complements but does not replace human security review and formal verification (the latter addressed in \S\ref{sec:tamarin}).

\section{Discussion}

\subsection{Three-Layer Defense Model}

AITH positions itself within a three-layer AI safety stack: \textbf{Layer~1} (Cryptographic Enforcement / AITH): deterministic boundary checks. \textbf{Layer~2} (Semantic Analysis / external AI firewall): probabilistic intent analysis via \texttt{semantic\_auditor\_pubkey}. \textbf{Layer~3} (Physical Safeguards): hardware circuit breakers.

\subsection{Legal Framing}

We frame AITH as ``Automated Standing Instructions with Cryptographic Attestation'' rather than ``Power of Attorney for AI.'' The Delegation Certificate is analogous to limit orders, auto-debit mandates, and standing dealing instructions.

\subsection{AITH as AI Passport Infrastructure}

We envision a future where each individual possesses a personal AI delegate acting on their behalf in the digital economy. In this paradigm, AITH serves not as a cage constraining AI, but as a \textbf{passport} enabling AI to enter human economic systems with verified, bounded authority. The exclusive binding between one human and one AI---cryptographically enforced via $h_A$---addresses the fundamental question: ``Whose AI is this, and what are they allowed to do?'' Current approaches (AP2, AIP, A-JWT) focus on enterprise agent management; AITH addresses the more fundamental problem of personal AI authorization for the age of individual AI delegates.

\subsection{Limitations and Future Work}

\textbf{Cryptographic Agility.} AITH's design is deliberately orthogonal to the choice of signature primitive. While we specify ML-DSA-87 as the current instantiation, the protocol architecture does not hardcode any specific algorithm. Should M-LWE face unexpected cryptanalytic progress, the Delegation Certificate's signature field can be hot-swapped to any EUF-CMA secure scheme (e.g., SLH-DSA, XMSS) without modifying the Boundary Engine, Responsibility Chain, or state machine. This cryptographic agility is a deliberate design goal, not an afterthought.

\textbf{Orthogonal Overlay Architecture.} AITH is designed as an overlay protocol that introduces minimal intrusiveness to existing systems. Target systems (brokerages, banks, APIs) need only implement two capabilities: (1)~a Boundary Engine endpoint that evaluates incoming operations against a stored Delegation Certificate, and (2)~a revocation listener for push notifications. No changes to existing authentication infrastructure (OAuth, mTLS) are required---AITH operates alongside them. This incremental deployability is critical for real-world adoption in regulated industries where ``rip and replace'' is infeasible.

Constraint completeness remains the human's responsibility. Formal verification of the Boundary Engine implementation (beyond protocol-level Tamarin verification) is a high-priority near-term objective. TEE hardware binding, cross-protocol AI PKI, key recovery/inheritance, and hierarchical revocation propagation are identified as future work.

\section{Conclusion}

We have presented AITH, a post-quantum continuous delegation protocol enabling humans to grant AI agents bounded, revocable, and auditable authority. The core idea is deceptively simple: sign once, enforce continuously. By amortizing all cryptographic cost to certificate issuance and running a purely deterministic Boundary Engine at 0.21\,$\mu$s per operation, AITH resolves what we believe is the central tension in AI agent deployment---the tradeoff between cryptographic rigor and operational speed. Five security theorems, machine-verified by the Tamarin Prover, establish that this simplicity does not come at the cost of formal guarantees.

The multi-model adversarial audit (\S\ref{sec:adversarial}) surfaced 12~vulnerabilities that no single auditor---human or AI---would likely have found alone. Interestingly, the most consequential architectural decision (continuous delegation) came not from any model's analysis but from the human principal's frustration with per-operation signing latency. We take this as evidence that human intuition and AI analytical depth are genuinely complementary in protocol design.

As AI agents move from research curiosities to economic participants, the question of authorization infrastructure becomes urgent. AITH offers one answer: treat the human-AI trust relationship as a first-class cryptographic object---signed, bounded, monitored, and revocable. We hope this work contributes a useful foundation for the trust infrastructure that this transition demands.

\section*{Acknowledgments}

The author acknowledges the contributions of four frontier AI systems (anonymized as Model~$\alpha$, $\beta$, $\gamma$, $\delta$) whose adversarial collaboration was essential to the protocol's development. Model~$\alpha$ served as protocol designer and Blue Team; Model~$\beta$ served as Red Team auditor and conducted formal verification; Models~$\gamma$ and $\delta$ provided independent stress testing. The author served as strategic director, making all final architectural decisions including the shift from per-operation signing to continuous delegation. This work was conducted during the author's doctoral studies at the University of Macau.

\clearpage
\appendix
\section*{Supplementary Materials (Extended Version)}

The following appendices are available in the extended arXiv version:

\begin{itemize}[leftmargin=*]
\item \textbf{Appendix A: Formal Proofs.} Full mathematical reductions for Theorems~1--5, including the complete EUF-CMA$\to$M-LWE reduction for Certificate Unforgeability.
\item \textbf{Appendix B: Tamarin Formal Verification.} The complete \texttt{AITH\_v5\_1.spthy} model (raw source code) and full terminal output logs confirming all five lemmas verified in 0.843s under the Dolev-Yao threat model.
\item \textbf{Appendix C: Protocol State Machine.} Detailed state transition tables, guard conditions for all nine transitions, and invariant specifications.
\item \textbf{Appendix D: Reference Implementation Excerpts.} Core Python snippets: Boundary Engine (6-check pipeline), SHA-256 Responsibility Chain construction, and push-based revocation dispatcher.
\item \textbf{Appendix E: Delegation Certificate JSON Schema.} Complete field-level specification with examples and validation rules.
\end{itemize}

\end{document}